\newcommand{\nc}{\newcommand}
\nc{\ben}{\begin{eqnarray}}
\nc{\een}{\end{eqnarray}}
\newcommand{\beqa}{\begin{eqnarray}}
\newcommand{\eeqa}{\end{eqnarray}}
\nc{\Z}{{\bold Z}}
\newtheorem{prop}{Proposition}[section]
\newtheorem{example}{Example}
\newtheorem{defn}{Definition}[section]
\newtheorem{rem}{Remark}
\newcommand{\cT}{\mathscr{T}}
\newcommand{\cB}{\mathscr{B}}
\newcommand{\cO}{\mathscr{O}}
\newcommand{\CC}{\mathbb{C}}
\newcommand{\ZZ}{\mathbb{Z}}
\newcommand{\be}{\bold{e}}
\newcommand{\bof}{\bold{f}}
\newcommand{\bg}{\bold{g}}
\newcommand{\bh}{\bold{h}}
\numberwithin{equation}{section}
\begin{document}

\title[Higher rank Askey-Wilson algebras]{Higher rank classical analogs of the Askey-Wilson algebra \\ from the $sl_N$ Onsager algebra}
\author{Pascal Baseilhac$^{*}$}
\address{$^*$ Institut Denis-Poisson CNRS/UMR 7013 - Universit\'e de Tours - Universit\'e d'Orl\'eans
Parc de Grammont, 37200 Tours, 
FRANCE}
\email{pascal.baseilhac@idpoisson.fr}

\author{Nicolas Cramp\'e$^{\dagger}$}
\address{$^\dagger$  Laboratoire Charles Coulomb (L2C), Univ Montpellier, CNRS, Montpellier, FRANCE}

\address{Centre de recherches \\ math\'ematiques,
Universit\'e de Montr\'eal, P.O. Box 6128, Centre-ville Station,
Montr\'eal, H3C 3J7,  CANADA}

\email{crampe1977@gmail.com}

\author{Rodrigo A. Pimenta$^{*,**}$}
\address{$^{**}$ Instituto de Fisica de Sao Carlos, Universidade de Sao Paulo, Caixa Postal 369, 13.560-590, Sao Carlos,
SP, BRAZIL} 
\address{CAPES Foundation, Ministry of Education of
Brazil, Brasilia - DF, Zip Code 70.040-020, BRAZIL}
\email{pimenta@ifsc.usp.br}

\begin{abstract} The $sl_N$-Onsager algebra has been introduced by Uglov and Ivanov in 1995. In this letter, a FRT presentation of the $sl_N$-Onsager algebra is given, its current algebra and commutative subalgebra are constructed. Certain quotients of the $sl_N$-Onsager algebra are then considered, which produce `classical' analogs of higher rank extensions of the Askey-Wilson algebra. As examples, the cases $N=3$ and  $N=4$ are described in details.
\end{abstract}

\maketitle

\vskip -0.5cm

{\small MSC:\ 81R50;\ 81R10;\ 81U15.}

{{\small  {\it \bf Keywords}: Onsager algebras; Askey-Wilson algebra; Yang-Baxter algebra; Integrable systems.}}
\vspace{0cm}

\vspace{3mm}

\section{Introduction}
Introduced by A. Zhedanov in \cite{Z91,Z92}, the so-called Askey-Wilson algebra provides an algebraic scheme for the Askey-Wilson polynomials and for the q-Racah polynomials in the finite dimensional case.
Since its introduction, it has been understood that the Askey-Wilson algebra is connected to the double affine Hecke algebra of type $(C_1^{\vee},C_1)$ \cite{K,T12,M,KM}, the theory of Leonard pairs \cite{T87,NT,TV} and $U_q(sl_2)$ \cite{GZ,WZ}. In the context of mathematical physics and the theory of quantum integrable systems,  it has found fruitful applications. For instance, on one hand the Askey-Wilson algebra provides a powerful tool in the Racah problem for $U_q(sl_2)$: given a 3-fold tensor product of irreducible unitary $U_q(sl_2)$ representations $V$, the intermediate Casimir operators are identified with the basic generators of the Askey-Wilson algebra\footnote{More recently, this has been extended to the universal Askey-Wilson algebra introduced in \cite{T11}, see \cite{HH15,HH17}.}  \cite{GZ}. Recall that the inner products (overlap coefficients) between the coupled bases of $V$ are the so-called Racah-Wigner coefficients. The Askey-Wilson algebra allows to determine the explicit expressions for these coefficients in terms of the Askey-Wilson polynomials \cite{GZ}. On the other hand, the generators of the Askey-Wilson algebra can be realized as twisted primitive elements of $U_q(sl_2)$ \cite{Z91}, thus giving one of the simplest realization  of the q-Onsager algebra \cite{Bas2} which arises in the analysis of the XXZ spin chain with non-diagonal boundary conditions \cite{BK2}.
\vspace{1mm}

A well-known presentation of the Askey-Wilson algebra is given in terms 
of three generators satisfying certain quadratic relations \cite{Z91,Z92}. A second presentation of the Askey-Wilson algebra is obtained as a  quotient of the tridiagonal algebra (q-Onsager algebra) \cite{Ter93,Ter99} by certain cubic relations, the so-called Askey-Wilson relations \cite{TV}. Besides these two presentations, a third one is given in terms of solutions  (called Sklyanin's operators) of the reflection equation algebra associated with $U_q(\widehat{sl_2})$ \cite{WZ,Bas2}. \vspace{1mm}

At the specialization $q=1$, the analog of these three presentations are known. The corresponding `classical' limit of the Askey-Wilson algebra admits a first presentation that 
is isomorphic to $sl_2$, and a second presentation as a quotient of the Onsager algebra \cite{Ons44} by the simplest example of 
the so-called Davies' relations \cite{Davies} (see \cite{BC} for details). A third presentation has been recently identified \cite{BBC}, given by solutions of a non-standard Yang-Baxter algebra 
associated with the classical r-matrix of $sl_2$. Importantly, this latter presentation of the `classical' analog of the Askey-Wilson algebra is derived from the Faddeev-Reshetikhin-Taktadjan (FRT) 
presentation of the Onsager algebra, see \cite{BC}.
\vspace{1mm}

Generalizations of the Askey-Wilson algebra is an active field of investigation. Different approaches have been considered in the literature. For instance, in \cite{PW} a `higher rank' Askey-Wilson algebra is introduced and studied. In that paper, it is important to stress that the term `higher rank' refers to the construction of the hidden algebra associated with $N-$fold  tensor products\footnote{For $N=3$, one recovers the usual Askey-Wilson algebra.} of irreducible representations of $U_q(sl_2)$. In \cite[Section 2.4]{BK2}, a different approach is considered: `generalized' Askey-Wilson algebras are introduced, characterizing tensor product evaluation representations of the q-Onsager algebra (that describe certain quotients of a coideal subalgebra of $U_q(\widehat{sl_2})$).  Also, let us mention the non-standard quantum algebra $U'_q(so_n)$ introduced in \cite{GK} (for a review, see \cite{K01}) which can be viewed as a higher rank extension of the Askey-Wilson algebra associated with the {\it finite} Lie algebra $so_n$. However, to our knowledge, the construction of extensions of the Askey-Wilson algebra associated with {\it higher rank affine Lie algebras} has not been considered yet in the literature. This problem is adressed in this letter, in the simplified case of $q=1$. \vspace{1mm}

In view of the richness of the Askey-Wilson algebra, there are several motivations for identifying higher rank extensions associated with higher rank Lie algebras $g$. For instance, these extensions are expected to provide an algebraic framework for multivariate orthogonal polynomials generalizing the Askey-Wilson ones. Also, the construction of a higher rank Askey-Wilson algebra associated with $U_q(g)$ should provide a useful tool to derive Racah-Wigner coefficients associated with 3-fold tensor products of irreducible unitary $U_q(g)$ representations. For $g=sl_N$, this problem is recently studied in \cite{NPZ,MMS,R18}. Also,  higher rank Askey-Wilson algebras will correspond to certain quotients of  the generalized $q-$Onsager algebras\footnote{Generalized $q-$Onsager algebras are known to be isomorphic to certain coideal subalgebras of quantum affine algebras. The isomorphism is given in \cite{BB1}. For an interpretation within the theory of quantum symmetric pairs associated with $U_q(\widehat{g})$, see \cite{Kolb}.} introduced in \cite{BB1}. In this context, they should determine the generating functions relating higher rank extensions of certain products of q-Pochhammer functions that are known for the $U_q(sl_2)$ case, see \cite{Ro}.  
\vspace{1mm}

The main results of this letter are the following: exploiting the framework of \cite{BBC} specialized to the  case of the affine Lie algebra $a_{N-1}^{(1)}$, we derive three different presentations of a higher rank generalization of the Askey-Wilson  algebra, see Section 3.  On the path, we also  identify a FRT presentation of the $sl_N$-Onsager algebra introduced in \cite{UI} (see also \cite{S18}). This FRT presentation is a powerful tool to derive in a systematic way the current algebra and mutually commuting elements of the $sl_N$-Onsager algebra that should find  applications in related integrable systems. For $q\neq 1$, see e.g. \cite{FK}.\vspace{1mm}

The text is organized as follows. In Section 2, the FRT presentation of the affine Lie algebra $a_{N-1}^{(1)}$ denoted $\cT$ is first introduced, see Definition \ref{def:T}. For the r-matrix associated with   $a_{N-1}^{(1)}$ \cite{J}, automorphisms of $\cT$, denoted $\theta_1,\theta_2$, are given  in Proposition \ref{prop:tw}. Then, the FRT presentation for the fixed point subalgebra of $\cT$ with respect to the automorphism $\theta_1$ is given in Proposition \ref{progO}. It provides an alternative presentation of the so-called $sl_N$-Onsager algebra (with original defining relations (\ref{eq:gO2})-(\ref{eq:gO1}) or equivalently Definition \ref{p1}) introduced by Uglov and Ivanov in \cite{UI}, denoted $\cO(a_{N-1}^{(1)})$. The FRT presentation allows to derive the corresponding current algebra, see Proposition \ref{prop1}, and the explicit elements in a commutative subalgebra of the $sl_N$-Onsager algebra, see Proposition \ref{pr:a2} and (\ref{eq:charges}). In Section 3, certain quotients of the $sl_N$-Onsager algebra are considered. These quotients provide natural extensions of the `classical' limit $q=1$ of the Askey-Wilson algebra. The cases of the $sl_3$ and $sl_4$-Askey-Wilson algebra, denoted respectively $AW(a_{2}^{(1)})$ and $AW(a_{3}^{(1)})$, are described in details: three different presentations are identified. For generic values of $N$, the FRT presentation of the $sl_N$-Askey-Wilson algebra $AW(a_{N-1}^{(1)})$ is given.
Concluding remarks follow in the last Section.

\section{The $sl_N$-Onsager algebra revisited}
In this section, firstly we recall the defining relations of the affine Lie algebra $a_{N-1}^{(1)}$ and its presentation within the framework of the classical 
Yang-Baxter algebra $\cT$. Two automorphisms of $\cT$ are introduced. Considering the fixed point subalgebra of $\cT$ with respect to one of these automorphisms, 
denoted $\cB$, a FRT presentation for the  $sl_N$-Onsager algebra is obtained. 
As an application, the corresponding current algebra and  mutually commuting quantities are derived.

\subsection{Classical Yang-Baxter algebra for the affine Lie algebra $a_{N-1}^{(1)}$ \label{sec:aff}}

The affine Lie algebra $a_{N-1}^{(1)}$ is generated by $\{c,e_{ij}^{(n)}\ | \ n\in \ZZ, 1\leq i,j\leq  N  \}$ subject to the defining relations, 
for $n \in\ZZ$, $1\leq i,j  \leq N$ and $1\leq k,l  \leq N$,
\begin{eqnarray}
&&\left[e_{ij}^{(m)},e_{kl}^{(n)}\right]=
\delta_{jk}e_{il}^{(m+n)}-\delta_{il}e_{kj}^{(m+n)}+m\,c\,\delta_{m+n 0}\left(\delta_{il}\delta_{jk}-\frac{1}{N}\delta_{ij}\delta_{kl}\right)\label{hrelation}\ ,\\
&&\sum_{i=1}^N e_{ii}^{(n)}=0\quad\text{and}\qquad [c,e_{ij}^{(n)}]=0\ .\label{hrelationf}
\end{eqnarray}
Note that we may introduce the Cartan generators $h_i^{(n)}$ defined by
\begin{equation}
 h_i^{(n)}=e^{(n)}_{ii}-e^{(n)}_{i+1 \ i+1}\;.\label{eq:cart}
\end{equation}
Thanks to the first relation in \eqref{hrelationf}, we can invert the previous relation and express $e^{(n)}_{ii}$ in terms of $h_i^{(n)}$
\begin{equation}
 e^{(n)}_{ii}=-\sum_{j=1}^{i-1} \frac{j}{N} h_j^{(n)} + \sum_{j=i}^{N-1} \frac{N-j}{N} h_j^{(n)}\;.
\end{equation}

For any affine Lie algebra, an alternative presentation is given by a classical Yang-Baxter algebra  \cite{Sk2}. The basic ingredient in this presentation is the so-called classical r-matrix which is defined as follows.
\begin{defn}  \label{def:cybe}
The matrix
  $r(x)\in End(\CC^N\otimes \CC^N)$ is called a classical r-matrix if it satisfies the skew-symmetric condition $r_{12}(x)=-r_{21}(1/x)$ and the classical Yang-Baxter equation
 \begin{equation}\label{eq:CYBE}
  [\ r_{13}(x_1/x_3)\ , \ r_{23}(x_2/x_3)\ ]=[\ r_{13}(x_1/x_3)+  r_{23}(x_2/x_3)\ , \ r_{12}(x_1/x_2)\ ]\;
 \end{equation}
for any $x_1,x_2,x_3$.  Here, $r_{12}(x) = r(x)\otimes \mathbb{I}$ , $r_{23}(x) =\mathbb{I}\otimes  r(x) $  and so on.
 \end{defn}
 \vspace{1mm}

For all non-exceptional affine Lie algebras, the classical r-matrices are obtained in \cite{J}. In particular,
for the affine Lie algebra $a_{N-1}^{(1)}$, the r-matrix is given by:
\begin{equation}\label{def:rJimbo}
 r(x)= \frac{1+x}{1-x} \left(\sum_{\genfrac{}{}{0pt}{1}{i=1}{}}^N    E_{ii}\otimes E_{ii}-\frac{1}{N}\mathbb{I}\otimes\mathbb{I}\right)
+\frac{2}{1-x} \sum_{\genfrac{}{}{0pt}{1}{i,j=1}{i< j}}^N     E_{ij}\otimes E_{ji}
+\frac{2x}{1-x}\sum_{\genfrac{}{}{0pt}{1}{i,j=1}{i> j}}^N     E_{ij}\otimes E_{ji}\ ,
\end{equation}
where  $\{ E_{ij}\ | \ 1\leq i,j\leq N \}$  denotes the standard basis of $End(\CC^N)$ given by $N\times N$ matrices with components $(E_{ij})_{kl}=\delta_{ik}\delta_{jl}$.\vspace{1mm}

Given the r-matrix $r(x)$, the so-called classical Yang-Baxter algebra 
$\cT$ with central extension $c$ is now introduced. Note that this algebra is a Lie algebra which can be understood as a classical analog of the algebra introduced in \cite{RS}.
\vspace{1mm}

\begin{defn}\label{def:T} Let  $r(x)\in End(\CC^N\otimes \CC^N)$ be the classical r-matrix defined by (\ref{def:rJimbo}).  $\cT$ is defined as the Lie algebra with generators
$\{c, e_{ij}^{\pm [\ell]} \ |\ 1\leq i,j\leq N,\ell\in\ZZ \}$.
Introduce the following elements of $End(\CC^N)\otimes a_{N-1}^{(1)}$:
\ben
T^{+}(x)&=&\sum_{i=1}^N E_{ii}\otimes e_{ii}^{(0)}+2 \sum_{1\leq i<j\leq N} E_{ij}\otimes e_{ji}^{(0)}
+2\sum_{n\geq 1}x^n \sum_{1\leq i,j\leq N} E_{ij}\otimes e_{ji}^{(n)}\ ,\label{eq:Tp}\\
T^{-}(x)&=&-\sum_{i=1}^N E_{ii}\otimes e_{ii}^{(0)}-2 \sum_{1\leq j<i\leq N} E_{ij}\otimes e_{ji}^{(0)}
-2\sum_{n\geq 1}x^{-n} \sum_{1\leq i,j\leq N} E_{ij}\otimes e_{ji}^{(-n)}\ .\label{eq:Tm}
\een
The defining relations \eqref{hrelation}-\eqref{hrelationf}  are equivalent to:
\begin{eqnarray}
\null[ T^{\pm}(x),c]&=&0\;,\label{Tg}\\
\null[ T_1^{\pm}(x),T^{\pm}_2(y)]&=&[ T_1^{\pm}(x)+T^{\pm}_2(y),r_{12}(x/y)]\;,\label{rpp}\\
\null[ T_1^{+}(x),T^{-}_2(y)]
&=&[ T_1^{+}(x)+T^{-}_2(y),r_{12}(x/y)]-2c\,r'_{12}(x/y)x/y\;,\label{rpm}\\
tr_1 T_1^\pm(x)=0\label{tr0}\;,
\end{eqnarray}
where $r'(x)$ denotes the derivative of $r(x)$ with respect to $x$.
\end{defn}

\begin{example} Explicitly, relation \eqref{eq:Tp} becomes, for $N=3$,
\ben
T^{+}(x)&=&
\left(
\begin{array}{ccc}
 e_{11}^{(0)} & 2e_{21}^{(0)} & 2e_{31}^{(0)} \\
 0 &   e_{22}^{(0)} & 2e_{32}^{(0)} \\
 0 & 0 &   e_{33}^{(0)} \\
\end{array}
\right)+2\sum_{n\geq	 1}x^n\left(
\begin{array}{ccc}
 e_{11}^{(n)} & e_{21}^{(n)} & e_{31}^{(n)} \\
 e_{12}^{(n)} &   e_{22}^{(n)} & e_{32}^{(n)} \\
 e_{13}^{(n)} & e_{23}^{(n)} &  e_{33}^{(n)} \\
\end{array}
\right)\,.
\een
\end{example}
\vspace{2mm}

Automorphisms of the Lie algebra $\cT$ are now considered. They play a central role in the construction of fixed point subalgebras of $\cT$ that will be discussed in the next subsection. Note that the proof of the following proposition essentially follows \cite[Proposition 2.1]{BBC}. 
\begin{prop}\label{prop:tw}
The map $\theta_1:End(\CC^N)\otimes a_{N-1}^{(1)} \rightarrow End(\CC^N)\otimes a_{N-1}^{(1)}$ such that\footnote{Here $(.)^t$ stands for the transposition in $End(\CC^N)$.} 
\begin{eqnarray}\label{eq:auU}
 T^\pm(x)\mapsto U\ T^\mp((-1)^N/x)^t\ U \quad,\qquad c \mapsto -c\  \quad\text{with} \qquad U=\sum_{j=1}^N (-1)^j E_{jj}
\end{eqnarray}
provides an involutive automorphism of $a_{N-1}^{(1)}$.
 
For $N$ even, the map $\theta_2:End(\CC^N)\otimes a_{N-1}^{(1)}\rightarrow End(\CC^N)\otimes a_{N-1}^{(1)}$ such that
\begin{eqnarray}\label{eq:auV}
 T^\pm(x)\mapsto  V(x)\ T^\mp(1/x)^t\ V(x)^{-1} \ \mp \ c x V'(x)V(x)^{-1}, \quad c \mapsto -c\   \ 
\end{eqnarray}
with
\beqa
V(x)=\sum_{j=1}^\frac{N}{2}\left(\frac{1}{\sqrt x}E_{j,\frac{N}{2}+j}+\epsilon  {\sqrt x} E_{\frac{N}{2}+j,j}\right) \quad \mbox{for $\epsilon=+1$ or $\epsilon=-1$}\  \label{eq:Undiag}
\eeqa 
 provides an involutive automorphism of $a_{N-1}^{(1)}$. 

\end{prop}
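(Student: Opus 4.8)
The plan is to work entirely within the FRT presentation of Definition \ref{def:T}: since the relations \eqref{Tg}--\eqref{tr0} are \emph{equivalent} to the defining relations \eqref{hrelation}--\eqref{hrelationf}, it suffices to check that each candidate map sends a solution $\{c,T^\pm(x)\}$ of \eqref{Tg}--\eqref{tr0} to another such solution and that it squares to the identity. Writing $S^\pm(x)=\theta_1(T^\pm(x))$ (resp. $W^\pm(x)=\theta_2(T^\pm(x))$), I would verify in turn: (a) $\mathrm{tr}_1 S^\pm=0$ together with the correct mode structure, so that $S^\pm$ again defines elements of $a_{N-1}^{(1)}$; (b) $S^\pm$ obey \eqref{rpp}--\eqref{rpm} with $c$ replaced by $-c$; and (c) $\theta_i^2=\mathrm{id}$. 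Item (a) is immediate from $\mathrm{tr}_1(U_1 A^{t_1}U_1)=\mathrm{tr}_1 A$ together with the substitution $x\mapsto(-1)^N/x$, which carries the non-negative modes of $T^-$ onto non-negative powers of $x$ and thus reproduces a $T^+$-type series. Involutivity for $\theta_1$ is a short computation using $U^2=\mathbb{I}$, $U^t=U$, and $(-1)^N/((-1)^N/x)=x$.

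The core is item (b), which I would reduce to a single covariance identity for $r(x)$. Two elementary remarks do the bookkeeping: for $A_1,B_2$ acting on distinct auxiliary spaces one has $[A_1^{t_1},B_2^{t_2}]=([A_1,B_2])^{t_1t_2}$, and the full transposition $t_1t_2$ is an anti-automorphism of $\mathrm{End}(\CC^N)^{\otimes 2}$, so that $([C,r_{12}])^{t_1t_2}=-[C^{t_1t_2},r_{12}^{t_1t_2}]$. Applying these to the image of \eqref{rpp}, the relation for $S^\pm$ follows once one checks
\begin{equation}
-(U\otimes U)\,r_{12}(y/x)^{t_1t_2}\,(U\otimes U)=r_{12}(x/y).\nonumber
\end{equation}
I would prove this from \eqref{def:rJimbo} via three facts: $r_{21}(z)^{t_1t_2}=r_{12}(z)$ (the diagonal part is symmetric, the two off-diagonal blocks being exchanged by both $P$ and $t_1t_2$), $(U\otimes U)r_{12}(z)(U\otimes U)=r_{12}(z)$ (because $UE_{ij}U=(-1)^{i+j}E_{ij}$), and the skew-symmetry $r_{12}(1/z)=-r_{21}(z)$. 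For \eqref{rpm} the same manipulation applies to the swapped relation $[T_1^-,T_2^+]$, whose central term carries the opposite sign $+2c\,r'_{12}(x/y)x/y$; the prescription $c\mapsto-c$ then reproduces exactly the required $-2(-c)r'_{12}(x/y)x/y$, matching the inhomogeneous term.

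For $\theta_2$ the same strategy applies with the spectral-parameter-dependent conjugation $T^\mp(1/x)\mapsto V(x)(\cdots)^{t}V(x)^{-1}$. The decisive preliminary computation is that, with $m=N/2$,
\begin{equation}
x\,V'(x)V(x)^{-1}=\tfrac12\,\mathrm{diag}(-I_m,\,I_m)=:H_0\nonumber
\end{equation}
is a \emph{constant} diagonal matrix, so the inhomogeneous term $\mp c\,xV'(x)V(x)^{-1}=\mp cH_0$ is central and drops out of \eqref{rpp}. Involutivity then follows from $V(1/x)^t=\pm V(x)$ (the sign $\epsilon$ cancelling in the sandwich $V(x)(\cdots)V(x)^{-1}$) together with $V(x)H_0V(x)^{-1}=-H_0$, which makes the two inhomogeneous contributions cancel. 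For the relations I would again conjugate the transposed $T$-relations, now by $V(x)\otimes V(y)$, reducing to the $V$-analogue of the covariance identity above.

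I expect the genuine obstacle to lie in the $\theta_2$ version of \eqref{rpm}. Because $V$ depends on the spectral parameter, conjugating $r_{21}(y/x)$ by $V(x)\otimes V(y)$ produces, besides $r_{12}(x/y)$, an anomalous contribution; at the same time the commutators of the $\mp cH_0$ pieces of $W_1^+$ and $W_2^-$ with the $V$-conjugated $r$-matrix no longer vanish. The crux is to show that all these extra pieces combine with the transported central term into precisely $[W_1^+(x)+W_2^-(y),r_{12}(x/y)]+2c\,r'_{12}(x/y)x/y$, i.e. that the $x$-dependence of $V$ and the normalisation of $H_0$ conspire so that the anomaly equals the derivative term. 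Verifying this balance, where the block-antidiagonal shape of $V(x)$ and the factor $xV'(x)V(x)^{-1}$ are indispensable, is the step I would treat with the most care; everything else reduces to the finite linear-algebra identities above.
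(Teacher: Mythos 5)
Your proposal is correct and takes essentially the same route as the paper's (one-line) proof, which checks that the images satisfy the FRT relations \eqref{Tg}--\eqref{tr0} using the invariance $U_1U_2\,r_{12}(x/y)=r_{12}(x/y)\,U_1U_2$ (your second fact, combined exactly as you do with $r_{12}^{t_1t_2}=r_{21}$ and skew-symmetry) and obtains involutivity from $U^t=U$, dismissing $\theta_2$ with ``the proof is similar''. In fact you supply strictly more detail than the paper, and the crux you flag for \eqref{rpm} does hold: one finds $V_1(x)V_2(y)\,r_{12}(x/y)\,V_1(x)^{-1}V_2(y)^{-1}=r_{12}(x/y)$ \emph{exactly} (so, contrary to your phrasing, the conjugated $r$-matrix itself carries no anomaly --- the anomaly sits entirely in the conjugated derivative term), and the required balance is the identity $[H_{0,1}-H_{0,2},\,r_{12}(z)]=2z\,\bigl(r'_{12}(z)-V_1V_2\,r'_{12}(z)\,V_1^{-1}V_2^{-1}\bigr)$ with $z=x/y$, which a direct block computation with the Jimbo $r$-matrix confirms.
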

\begin{proof}
We show that the images of $T^{\pm}(x)$ by the automorphism $\theta_1$ satisfy relations \eqref{Tg}-\eqref{tr0} by using that
\begin{eqnarray} 
 U_1 U_2 r_{12}(x/y)=r_{12}(x/y)U_1 U_2\;.\label{eqU}
\end{eqnarray}
The involutive property is proven by remarking that $U^t= U$. The proof is similar for $\theta_2$.
\end{proof} 

\vspace{1mm}

The action of each automorphism on the generators of $a_{N-1}^{(1)}$ is now described. 
From (\ref{eq:auU}), the automorphism $\theta_1$ is such that 
\beqa
\theta_1( e_{ij}^{(n)})= (-1)^{Nn+i+j+1}e_{ji}^{(-n)} \quad \mbox{and}\quad \theta_1(c)=-c.\label{eq:aut1}
\eeqa
From (\ref{eq:auV}), the automorphism $\theta_2$ is such that (we recall that $N$ must be even in this case)
\beqa
 && \quad \theta_2(e_{ij}^{(n)})=\begin{cases}
                          -e_{\bar \jmath\bar \imath}^{(-n)} +\bar\alpha_i\frac{c}{2}\delta_{ij}\delta_{n0}&\text{if }  1\leq i,j\leq N/2 \quad\text{or}\quad N/2+1\leq i,j\leq N  \\
                          -\epsilon\ e_{\bar \jmath\bar \imath}^{(-n+1)}&\text{if }  1\leq i\leq N/2,\ N/2+1\leq j\leq N \\
                          -\epsilon\ e_{\bar \jmath\bar \imath}^{(-n-1)}&\text{if }  N/2+1\leq i\leq N,\ 1\leq j\leq N/2\ , 
                         \end{cases}
 \ \mbox{and}\ \theta_2(c)= -c\,,\label{eq:aut2}
\eeqa
where $\bar \imath=i+N/2$ if $1\leq i\leq N/2$, $\bar \imath=i-N/2$ if $N/2+1\leq i\leq N$, $\bar \alpha_i=1$ if $1\leq i\leq N/2$ and $\bar \alpha_i=-1$ if $N/2+1\leq i\leq N$.

\subsection{The $sl_N$-Onsager algebra}\label{sec:Ons}

 The $sl_{N}$-Onsager algebra has been introduced  in \cite{UI} as  one possible higher rank generalization of the Onsager algebra.  It is known that the Onsager algebra \cite{Ons44}  is isomorphic to the fixed point subalgebra of the affine Lie algebra $\widehat{sl_2}$ \cite{Perk1,Davies,Roan}. By analogy,  
the $sl_{N}$-Onsager algebra is defined as the fixed point  subalgebra of $a_{N-1}^{(1)}$  under the action of the involutive automorphism $\theta_1$ \eqref{eq:aut1}, see \cite{UI,S18} for details. Therefore, in the text below the $sl_{N}$-Onsager algebra is denoted $\cO(a_{N-1}^{(1)})$. Let us introduce the following generators of the subalgebra of $a_{N-1}^{(1)}$
\begin{eqnarray}
 &&B_{ij}^{(n)}=e_{ij}^{(n)}+(-1)^{i+j+1+nN} e_{ji}^{(-n)}\; \quad \mbox{for}\quad 1\leq i, j\leq N \label{eq:Bij}
\end{eqnarray}
which are left invariant under the automorphism $\theta_1$ \eqref{eq:aut1}.
\begin{defn}\label{p1}
The algebra $\cO(a_{N-1}^{(1)})$ is generated by $\{ B_{ij}^{(n)} |n \in\ZZ\}$  subject to the following relations:
\begin{eqnarray}
 &&[B_{ij}^{(m)},B_{kl}^{(n)}]  =  \delta_{jk}B_{il}^{(m+n)} -
\delta_{il}B_{kj}^{(m+n)}+\delta_{ik}(-1)^{i+j+1+mN}B_{jl}^{(n-m)}
-\delta_{jl}(-1)^{i+j+1+mN}B_{ki}^{(n-m)}\ , \label{Bdef1}\\
&&B_{ij}^{(n)}=(-1)^{i+j+1+nN}B_{ji}^{(-n)}\;, \label{Bdef2}\\
&&\sum_{i=1}^N B_{ii}^{(n)}=0\;.\label{Bdef3}
\end{eqnarray}
\end{defn}

The original presentation of the $sl_N$-Onsager algebra given in \cite{UI} is easily identified. Introduce: 
\begin{eqnarray}
 &&A_{ij}^{(n)}=B_{ij}^{(n)}\quad\text{for }1\leq i\neq j\leq N\ ,\\
 &&G_i^{(n)}=B_{ii}^{(n)}-B_{i+1\ i+1}^{(n)}=h_{i}^{(n)}+(-1)^{nN+1}h_{i}^{(-n)}\qquad\text{for }\ \ 1\leq i\leq N-1\,,
\end{eqnarray}
where $h_{i}^{(n)}$ is defined by \eqref{eq:cart}. In terms of the generators $\{A_{ij}^{(n)},G_{ij}^{(n)}\}$, the defining relations (\ref{Bdef1})-(\ref{Bdef3}) now read:
 \ben
[A_{ij}^{(m)},A_{kl}^{(n)}] & = & \delta_{jk}A_{il}^{(m+n)} -
\delta_{il}A_{kj}^{(m+n)}  \label{eq:gO2} \\
& + & \delta_{ik}(-1)^{i+j+1+mN}\theta (j<l)A_{jl}^{(n-m)} +
\delta_{ik}(-1)^{i+l+nN}\theta (l<j)A_{lj}^{(m-n)}  \nonumber \\
& + & \delta_{jl}(-1)^{k+l+1+nN}\theta (i<k)A_{ik}^{(m-n)} 
+\delta_{jl}(-1)^{i+l+mN}\theta (k<i)A_{ki}^{(n-m)}  \nonumber \\
& + & \delta_{ik} \delta_{jl} (-1)^{i+j+1+nN}\sum_{s=i}^{j-1} G_s^{(m-n)}, \; \; \;
\; m\geq n \  ,   \nonumber \\ \mbox{}
[G_i^{(m)},A_{kl}^{(n)}] & = & ( \delta_{ik}- \delta_{ki+1} - \delta_{li} +
\delta_{li+1})(A_{kl}^{(m+n)}-(-1)^{mN}A_{kl}^{(n-m)}) \ ,\label{eq:gO1}\\ \mbox{}
 [G_i^{(m)}, G_j^{(n)}] & = & 0\ .\nonumber
\een
Here $\theta (x)$ is such that $\theta (x) \; = \; 1\ (0)$ if $x$ is true\ (false).\vspace{1mm}

It is important to mention that an alternative presentation of  the $sl_N$-Onsager algebra (equivalently $\cO(a_{N-1}^{(1)})$)  for $N>2$ is known in terms of $N$ generators $\be_i$, $i=1,...,N$ subject to the relations \cite{UI}:
 \begin{eqnarray}
\big[\be_i,\big[\be_i,\be_j\big]\big]&=&\be_j \quad \mbox{if} \quad i,j \quad \mbox{are adjacent},\label{eq:OAn}\\
\big[\be_i,\be_j\big]&=&0 \qquad \mbox{otherwise}.
 \end{eqnarray}
In particular, one has the identification:
\beqa
 \be_{i}=A_{ii+1}^{(0)}\quad  \mbox{for}\quad  i=1,...,N-1, \qquad \be_{N}=A_{1N}^{(-1)}.
\eeqa
For $N=2$, the corresponding presentation of the original Onsager algebra (i.e. $\cO(a_{1}^{(1)})$) is given in terms of two generators $\{\be_{0}, \be_{1}\}$ satisfying the so-called Dolan-Grady relations \cite{DG82}.\vspace{1mm}

\subsection{FRT presentation of  $\cO(a_{N-1}^{(1)})$}
Following \cite{BBC}, the algebra $\cO(a_{N-1}^{(1)})$ admits an alternative presentation given by a non-standard classical  Yang-Baxter algebra as we now show. The basic ingredient to construct this presentation is  a non-skew symmetric r-matrix which is defined as follows.
\begin{defn}\label{def:nscybe} The matrix $\overline{r}(x,y)\in End(\CC^N\otimes \CC^N)$ is called a non-standard classical r-matrix  if it is a solution of 
the non-standard classical Yang-Baxter equation
 \begin{equation}\label{eq:nsCYBE}
  [\ \overline{r}_{13}(x_1,x_3)\ , \ \overline{r}_{23}(x_2,x_3)\ ]=[\ \overline{r}_{21}(x_2,x_1)\ , \ \overline{r}_{13}(x_1,x_3)\ ]+[\ \overline{r}_{23}(x_2,x_3)\ , \ \overline{r}_{12}(x_1,x_2)\ ]\;
 \end{equation}
for any $x_1,x_2,x_3$. 
\end{defn}
Note that the non-standard classical Yang-Baxter equation can be understood as a generalization of the classical Yang-Baxter equation (\ref{eq:CYBE}). If $\overline{r}_{12}(x,y)=r_{12}(x/y)$ and   $\overline{r}_{12}(x,y)=-\overline{r}_{21}(y,x)$, the non-standard classical Yang-Baxter equation reduces to (\ref{eq:CYBE}). \vspace{1mm}

For all non-exceptional affine Lie algebras, given the r-matrix $r(x)$ and the knowledge of automorphisms of $\cT$, solutions of the non-standard classical Yang-Baxter equation are easily constructed. For the affine Lie algebra $a_{N-1}^{(1)}$ and the automorphism $\theta_1$ of Proposition \ref{prop:tw}, 
\beqa
\overline{r}_{12}(x,y)=r_{12}(x/y)+U_1 r_{12}^{t_1}((-1)^N/(xy))U_1^{-1}
\eeqa
satisfies (\ref{eq:nsCYBE}). Explicitly, using (\ref{def:rJimbo}), (\ref{eq:auU}) one gets:
\beqa\label{eq:rbarO}
\overline{r}(x,y)&=& -\left(\frac{x+y}{x-y} + \frac{xy+(-1)^N}{(-1)^N-xy}\right) \left(\sum_{\genfrac{}{}{0pt}{1}{i=1}{}}^N    E_{ii}\otimes E_{ii}-\frac{1}{N}\mathbb{I}\otimes\mathbb{I} \right) 
-\frac{2y}{x-y} \sum_{\genfrac{}{}{0pt}{1}{i,j=1}{i< j}}^N     E_{ij}\otimes E_{ji} \nonumber\\
&&-\frac{2x}{x-y}\sum_{\genfrac{}{}{0pt}{1}{i,j=1}{i> j}}^N     E_{ij}\otimes E_{ji}+\frac{2xy}{xy-(-1)^N} \sum_{\genfrac{}{}{0pt}{1}{i,j=1}{i< j}}^N   (-1)^{i+j}  E_{ji}\otimes E_{ji} +\frac{2(-1)^N}{xy-(-1)^N}\sum_{\genfrac{}{}{0pt}{1}{i,j=1}{i> j}}^N  (-1)^{i+j}  E_{ji}\otimes E_{ji}. \label{eq:rbar}
\eeqa

Given the non-standard r-matrix $\overline{r}(x,y)$ (\ref{eq:rbar}), the non-standard classical Yang-Baxter algebra 
$\cB$  associated with $a_{N-1}^{(1)}$  is now introduced. 
\begin{defn}[see also \cite{BBC}]\label{def:B}  $\cB$ is the Lie algebra with generators 
$\{B_{ji}^{(n)}\ |\  \ 1\leq i,j\leq N, n\in\ZZ_{\geq 0} \}$. Introduce:
\ben
B(x)&=& 2\sum_{1\leq i<j\leq N} E_{ij}\otimes B_{ji}^{(0)} +2\sum_{1\leq i,j\leq N}\sum_{n\geq 1}x^n E_{ij}\otimes B_{ji}^{(n)}\ .\label{eq:Bx}
\een
The defining relations are:
\beqa
&& [\ B_{1}(x)\ , \ B_{2}(y)\ ]=[\ \overline{r}_{21}(y,x)\ , \ B_{1}(x)\ ]+[\ B_{2}(y)\ , \ \overline{r}_{12}(x,y)\ ]\; ,\label{eq:Al}\\
&& tr_1B_{1}(x)=0\,,\label{eq:Al2}
\eeqa
with (\ref{eq:rbar}).
\end{defn}

Using the results of Section \ref{sec:aff}, we are now ready to provide the FRT presentation of the $sl_{N}$-Onsager algebra $\cO(a_{N-1}^{(1)})$. Note that the construction presented below can be seen as a classical limit of the presentation of the (quantum) twisted Yangians given in \cite{MNO,MRS}.
\begin{prop}\label{progO} Define the element of  $End(\CC^N)\otimes \cO(a_{N-1}^{(1)})$:
 \begin{equation}\label{eq:Bxg}
 B(x)= T^+(x) + \theta_1(T^+(x))=T^+(x) + U\ T^-((-1)^N/x)^t\ U^{-1} \;.
\end{equation}
The relations \eqref{eq:gO2}-\eqref{eq:gO1} are equivalent to the relations \eqref{Bdef1}-\eqref{Bdef3}.
\end{prop}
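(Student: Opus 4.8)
The plan is to transfer the FRT relations of $\cT$ under the automorphism $\theta_1$ into the defining relations of $\cB$, and then to expand the resulting matrix relation in powers of the spectral parameters to recover the component relations \eqref{Bdef1}--\eqref{Bdef3} (equivalently \eqref{eq:gO2}--\eqref{eq:gO1}). The key observation is that the element $B(x)$ defined in \eqref{eq:Bxg} is, by construction, the $\theta_1$-symmetrized combination $T^+(x)+\theta_1(T^+(x))$, so it is a fixed point of $\theta_1$ and therefore lands inside the fixed-point subalgebra. The first step is thus to verify that the matrix coefficients of $B(x)$, read off from \eqref{eq:Bxg} using the explicit forms \eqref{eq:Tp}--\eqref{eq:Tm} of $T^\pm$ and the conjugation by $U=\sum_j(-1)^jE_{jj}$, are exactly the generators $B_{ij}^{(n)}=e_{ij}^{(n)}+(-1)^{i+j+1+nN}e_{ji}^{(-n)}$ of \eqref{eq:Bij}. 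This matching of the two definitions of $B(x)$ (Definition \ref{def:B} and \eqref{eq:Bxg}) is the bookkeeping heart of the proof.

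Next I would derive the matrix relation \eqref{eq:Al} for $B(x)$ directly from the $\cT$-relations \eqref{Tg}--\eqref{tr0}. Writing $B_1(x)=T_1^+(x)+U_1\,T_1^-((-1)^N/x)^t\,U_1^{-1}$ and similarly for $B_2(y)$, I would compute $[B_1(x),B_2(y)]$ by splitting it into the four brackets of $T^+$ and (conjugated, transposed) $T^-$ blocks. The two `diagonal' pieces $[T_1^+,T_2^+]$ and the $\theta_1$-image of it are governed by \eqref{rpp}, while the two `mixed' pieces $[T_1^+,\theta_1(T_2^+)]$ and $[\theta_1(T_1^+),T_2^+]$ are governed by \eqref{rpm}. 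Using the intertwining identity $U_1U_2\,r_{12}(x/y)=r_{12}(x/y)\,U_1U_2$ from \eqref{eqU} (and the transpose relation $r_{12}^{t_1t_2}=r_{21}$), these four contributions should reorganize into the non-standard $\overline{r}$-terms on the right-hand side of \eqref{eq:Al}, with $\overline{r}$ precisely the combination $r_{12}(x/y)+U_1 r_{12}^{t_1}((-1)^N/(xy))U_1^{-1}$ already shown to solve \eqref{eq:nsCYBE}. The central term $-2c\,r'_{12}(x/y)\,x/y$ from \eqref{rpm} contributes, but since $\theta_1(c)=-c$ the two mixed brackets carry opposite central contributions, so I expect the anomaly to either cancel or reproduce exactly the $G_s^{(m-n)}$ Cartan term appearing in \eqref{eq:gO2}; tracking this cancellation/survival carefully is where care is needed. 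The trace condition \eqref{eq:Al2} follows immediately from \eqref{tr0} together with $\mathrm{tr}\,U(\cdots)^tU^{-1}=\mathrm{tr}(\cdots)$.

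The final step is to expand the matrix identity \eqref{eq:Al} in the basis $E_{ij}\otimes E_{kl}$ and in powers of $x^m y^n$. Collecting the coefficient of each monomial, using the explicit pole structure of $\overline{r}(x,y)$ in \eqref{eq:rbar} (which has simple poles at $x=y$ and at $xy=(-1)^N$, producing respectively the $\delta$-contractions $\delta_{jk},\delta_{il}$ and the `twisted' contractions $\delta_{ik},\delta_{jl}$ with signs $(-1)^{i+j+1+mN}$), should reproduce relation \eqref{Bdef1} term by term; the symmetry relation \eqref{Bdef2} is built into the definition \eqref{eq:Bij}, and \eqref{Bdef3} is the component form of \eqref{eq:Al2}.

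\textbf{Main obstacle.} I expect the principal difficulty to be the careful management of the spectral-parameter expansions of the two distinct rational kernels in $\overline{r}(x,y)$ and the attendant sign factors $(-1)^{i+j}$, $(-1)^{nN}$. The two poles contribute series $\tfrac{1}{x-y}=\tfrac1x\sum_{k\ge0}(y/x)^k$ and $\tfrac{1}{xy-(-1)^N}=\sum_{k\ge0}(-1)^{N(k+1)}(xy)^{-(k+1)}$ with different supports in $(m,n)$, and verifying that their combination yields precisely the shifts $m+n$, $n-m$, and $m-n$ on the right-hand side of \eqref{Bdef1}---including the constraint $m\ge n$ and the emergence of the Cartan sum $\sum_{s=i}^{j-1}G_s^{(m-n)}$ from the diagonal channel---is the delicate computational core of the argument.
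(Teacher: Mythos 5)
Your proposal is correct and follows essentially the same route as the paper's proof: identify the matrix coefficients of \eqref{eq:Bxg} with the generators \eqref{eq:Bij}, so that \eqref{eq:Bx} holds (this immediately gives \eqref{Bdef2}--\eqref{Bdef3}), and then expand \eqref{eq:Al} with \eqref{eq:rbar} componentwise to recover \eqref{Bdef1}. Your intermediate step --- deriving \eqref{eq:Al} from \eqref{rpp}--\eqref{rpm} by splitting $[B_1(x),B_2(y)]$ into four brackets --- is exactly what the paper leaves implicit, and it is sound.

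One point you left as an either/or must be settled, and it closes on the side of cancellation. The central contributions of the two mixed brackets are $-2cw\,U_2\,(r'_{12}(w))^{t_2}\,U_2$ and $+2cw\,U_1\,(r'_{21}(w))^{t_1}\,U_1$ at $w=(-1)^N xy$ (the relative sign coming from antisymmetry of the bracket), and these matrices coincide: using $U^t=U$, $r_{12}^{t_1t_2}(w)=r_{21}(w)$ and \eqref{eqU}, one gets $U_2\,(r'_{12}(w))^{t_2}\,U_2=U_1\,(r'_{21}(w))^{t_1}\,U_1$, so the anomaly vanishes identically. This is forced a priori: since $\theta_1(c)=-c$, the central element does not belong to the fixed-point subalgebra, so no term proportional to $c$ can survive in $[B_1(x),B_2(y)]$; in particular the Cartan sum $\sum_s G_s^{(m-n)}$ in \eqref{eq:gO2} is not an anomaly remnant but arises from the non-central $\delta$-contractions of \eqref{hrelation} in the diagonal channel $i=k$, $j=l$. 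Finally, a small slip in your expansion formulas: $1/(xy-(-1)^N)$ expands as $\sum_{k\geq 0}(-1)^{Nk}(xy)^{-(k+1)}$ in inverse powers, or as $-\sum_{k\geq 0}(-1)^{N(k+1)}(xy)^{k}$ in non-negative powers, not as written; this does not affect the structure of your argument.
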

\begin{proof}
 Using the explicit expressions \eqref{eq:Tp}-\eqref{eq:Tm} of $T^\pm(x)$, from the definition (\ref{eq:Bxg})  one has  (\ref{eq:Bx}) with  (\ref{eq:Bij}).  This implies (\ref{Bdef2}) as well as (\ref{Bdef3}) using the first equation in (\ref{hrelationf}).  Then, inserting  $B(x)$ given by \eqref{eq:Bxg} with all the independent generators of $\cO(a_{N-1}^{(1)})$ into the relation \eqref{eq:Al} with (\ref{eq:rbar}), one finds all the commutation relations (\ref{Bdef1}).
\end{proof}

For $N=2$, a current algebra presentation for the Onsager algebra has been derived in \cite{BBC}. For $N>2$, we can use the results of previous sections to derive the current presentation of the $sl_N$-Onsager algebra.
\begin{prop}\label{prop1} The algebra $\cO(a_{N-1}^{(1)})$ admits the following current presentation. Define the currents:
\begin{eqnarray}
B_{ij}(x)=\begin{cases}
            \displaystyle  2\sum_{n\geq 0}  x^nB_{ij}^{(n)}&\text{for}\quad i>j\\
             \displaystyle   2\sum_{n\geq 1}  x^nB_{ij}^{(n)}&\text{otherwise} 
             \end{cases} \ .
\end{eqnarray}
The defining relations for the  currents are given by:
\beqa
\qquad [B_{ij}(x),B_{kl}(y)]&=&\frac{2\delta_{jk}}{x-y}\Big( B_{il}(x)\big(xH(k-l)+yH(l-k)\big)-B_{il}(y)\big(yH(i-j)+xH(j-i)\big)   \Big)\\
&-&\frac{2\delta_{il}}{x-y}\Big( B_{kj}(x)\big(xH(k-l)+yH(l-k)\big)-B_{kj}(y)\big(yH(i-j)+xH(j-i)\big)   \Big)\nonumber\\
&-&\frac{2\delta_{ik}}{xy-(-1)^N}\Big( B_{lj}(x)(-1)^{k+l}\big(xyH(l-k)+(-1)^NH(k-l)\big)\nonumber\\
&&\hspace{3cm}-B_{jl}(y)(-1)^{i+j}\big(xyH(j-i)+(-1)^NH(i-j)\big)   \Big)\nonumber\\
&+&\frac{2\delta_{jl}}{xy-(-1)^N}\Big( B_{ik}(x)(-1)^{k+l}\big(xyH(l-k)+(-1)^NH(k-l)\big)\nonumber\\
&&\hspace{3cm}-B_{ki}(y)(-1)^{i+j}\big(xyH(j-i)+(-1)^NH(i-j)\big)   \Big)\nonumber
\eeqa
where $H$ is the step function: $H(i)=1$ if $i>0$, $H(i)=1/2$ if $i=0$ and $H(i)=0$ if $i<0$. 
\end{prop}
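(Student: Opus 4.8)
The plan is to read off the current relations directly from the FRT relation \eqref{eq:Al}, which holds in $\cO(a_{N-1}^{(1)})$ by Proposition \ref{progO}. The first step is to observe that the generating matrix \eqref{eq:Bx} collects the currents as its entries: comparing \eqref{eq:Bx} with the definition of $B_{ij}(x)$, the coefficient of $E_{ij}$ in $B(x)$ equals $B_{ji}(x)$ in every case, the split $n\geq 0$ versus $n\geq 1$ in the definition of the currents being exactly the split between the off-diagonal zero modes $2B_{ji}^{(0)}$ (present for $i<j$) and their absence for $i\geq j$. Hence $B(x)=\sum_{1\leq i,j\leq N}E_{ij}\otimes B_{ji}(x)$, so the left-hand side of \eqref{eq:Al}, projected on $E_{ij}\otimes E_{kl}$ in the two auxiliary spaces, produces exactly the commutator $[B_{ji}(x),B_{lk}(y)]$ of two currents (the stated form is recovered after relabelling indices).

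Next I would substitute the explicit non-standard r-matrix \eqref{eq:rbar} into the two matrix commutators on the right-hand side of \eqref{eq:Al}. Since $\overline{r}(x,y)$ carries no algebra part, each of $[\overline{r}_{21}(y,x),B_1(x)]$ and $[B_2(y),\overline{r}_{12}(x,y)]$ is a finite sum of products $E_{ab}E_{cd}=\delta_{bc}E_{ad}$ in the auxiliary spaces times a single current, and the whole computation reduces to bookkeeping of Kronecker deltas and of the rational prefactors in $x,y$. The diagonal and standard off-diagonal parts of $\overline{r}$ (the $E_{ii}\otimes E_{ii}$ and $E_{ij}\otimes E_{ji}$ terms, with prefactors built on $1/(x-y)$) generate the $\delta_{jk}$ and $\delta_{il}$ lines of the claimed identity, while the non-standard part (the $E_{ji}\otimes E_{ji}$ terms, with prefactors built on $1/(xy-(-1)^N)$ and signs $(-1)^{i+j}$) generates the $\delta_{ik}$ and $\delta_{jl}$ lines. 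Collecting the coefficient of each $E_{ij}\otimes E_{kl}$ then yields the four blocks of the stated relation.

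The role of the step function $H$ is to repackage the piecewise prefactors into a single uniform expression. In \eqref{eq:rbar} the standard off-diagonal coefficient is $-2y/(x-y)$ for $i<j$ and $-2x/(x-y)$ for $i>j$, whereas the diagonal carries $(x+y)/(x-y)$; all three cases are reproduced by the single combination $\tfrac{2}{x-y}\big(yH(j-i)+xH(i-j)\big)$, the convention $H(0)=\tfrac12$ absorbing the factor-two mismatch between diagonal and off-diagonal entries at coincident indices. The non-standard block behaves identically with $1/(x-y)$ replaced by $1/(xy-(-1)^N)$ and the pair $(y,x)$ replaced by $(xy,(-1)^N)$, so that its would-be diagonal term $\tfrac{xy+(-1)^N}{xy-(-1)^N}$ matches precisely the second summand on the first line of $\overline{r}(x,y)$. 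After this substitution the four groups of terms assemble into the $xH(\cdot)+yH(\cdot)$ combinations displayed in the statement.

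The step I expect to be the main obstacle is the treatment of the non-standard block $E_{ji}\otimes E_{ji}$. Unlike the standard FRT terms $E_{ij}\otimes E_{ji}$, this block has a transposed index structure, so the reductions coming from $E_{ji}E_{ab}$ are less routine and must be tracked together with the signs $(-1)^{i+j},(-1)^{k+l}$ and with the reflection $n\mapsto -n$ implicit in \eqref{Bdef2}; making these signs and the $H(0)=\tfrac12$ matching land exactly on the third and fourth lines is the delicate part. As an independent consistency check one may instead resum the mode relations \eqref{Bdef1} directly: multiplying by $x^m y^n$ and summing over the appropriate ranges turns the terms $B_{il}^{(m+n)}$ into the $1/(x-y)$ contributions and, thanks to the factor $(-1)^{mN}$, turns the terms $B_{jl}^{(n-m)}$ into the $1/(xy-(-1)^N)$ contributions, reproducing the same identity.
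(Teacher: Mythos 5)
Your proposal is correct and follows essentially the same route as the paper: the paper's proof consists precisely of substituting $B(x)=\sum_{1\leq i,j\leq N}E_{ij}\otimes B_{ji}(x)$ into the non-standard classical Yang-Baxter relation \eqref{eq:Al} with the r-matrix \eqref{eq:rbar} and reading off the coefficient of each $E_{ij}\otimes E_{kl}$, which is exactly your plan. Your identification of the zero-mode split, the role of $H(0)=\tfrac12$ in unifying the diagonal and off-diagonal r-matrix coefficients, and the origin of the $\delta_{ik},\delta_{jl}$ lines in the non-standard block are all accurate elaborations of what the paper compresses into the phrase ``by straightforward calculations.''
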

\begin{proof} By straightforward calculations one derives the commutation relations between the currents from the non-standard classical Yang-Baxter algebra (\ref{eq:Al}) with $ B(x)=\sum_{1\leq i,j\leq N} E_{ij}\otimes B_{ji}(x)$.
%
\end{proof}

\subsection{Commutative subalgebra of  $\cO(a_{N-1}^{(1)})$ \label{sec:CS}}
In the context of integrable systems, the explicit construction of mutually commuting conserved quantities is essential. For models generated from the Onsager algebra, elements in the commutative subalgebra of the Onsager algebra have been constructed in  \cite{Ons44,DG82,Davies}. Recently, it was shown that a generating function for all mutually conserved quantities is easily derived using the FRT presentation of the Onsager algebra \cite{BBC}. By analogy, a generating function for elements in a commutative subalgebra of $\cO(a_{N-1}^{(1)})$ is now given.
\begin{prop}\label{pr:a2} Define
\ben
M(x)&=&\left(x-\frac{(-1)^N}{x}\right) \sum_{i=1}^{N}\mu_i E_{ii} + \sum_{\genfrac{}{}{0pt}{1}{i,j=1}{i< j}}^N\left[ \left(\kappa_{ij}+\frac{\kappa_{ij}^*}{x}\right)  E_{ij}
-  (-1)^{i+j}\left(\kappa_{ij}+(-1)^N x\kappa_{ij}^*\right)  E_{ji}\right]\label{eq:exM}
\een
where $\mu_i, \kappa_{ij}$ and $\kappa_{ij}^*$  are free scalar parameters. Then,
\begin{equation}\label{tb2}
 b(x)=tr M(x) B(x)
\end{equation}
satisfies
 \begin{equation}
  [ b(x)\ ,\ b(y) ]=0\;.\label{eq:bbcom}
 \end{equation}
\end{prop}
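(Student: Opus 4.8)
The plan is to run the standard Sklyanin-type argument for commuting trace functionals, now in the classical non-standard setting governed by \eqref{eq:Al}. First I would write $b(x)$ and $b(y)$ as partial traces over two copies of the auxiliary space, $b(x)=tr_1\big(M_1(x)B_1(x)\big)$ and $b(y)=tr_2\big(M_2(y)B_2(y)\big)$. Since the entries of $M(x)$ are scalars, $M_1(x)$ commutes with $B_2(y)$ and $M_2(y)$ commutes with $B_1(x)$, so that
\[
[\,b(x)\,,\,b(y)\,]=tr_{12}\big(M_1(x)M_2(y)\,[\,B_1(x),B_2(y)\,]\big).
\]
Substituting the defining relation \eqref{eq:Al}, $[B_1(x),B_2(y)]=[\overline r_{21}(y,x),B_1(x)]+[B_2(y),\overline r_{12}(x,y)]$, and using cyclicity of the trace in the combined auxiliary space to move the scalar matrices $\overline r_{12},\overline r_{21}$ past the single factor of $B$ present in each term (no ordering ambiguity arises, as each term is linear in $B$), one obtains
\[
[\,b(x)\,,\,b(y)\,]=tr_{1}\big(P_1(x,y)\,B_1(x)\big)+tr_{2}\big(Q_2(x,y)\,B_2(y)\big),
\]
with the purely c-number matrices on $\CC^N$
\[
P_1(x,y)=tr_2\big([\,M_1(x)M_2(y)\,,\,\overline r_{21}(y,x)\,]\big),\qquad Q_2(x,y)=tr_1\big([\,\overline r_{12}(x,y)\,,\,M_1(x)M_2(y)\,]\big),
\]
the partial trace over the spectator space being legitimate because $B_1(x)$ (resp. $B_2(y)$) acts trivially on space $2$ (resp. $1$).

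The decisive step, which I expect to be the main obstacle, is to prove that both partial traces are scalar, $P_1(x,y)\propto\mathbb I$ and $Q_2(x,y)\propto\mathbb I$. Granting this, the proof closes immediately: by the tracelessness condition \eqref{eq:Al2} one has $tr_1\big(\mathbb I\,B_1(x)\big)=tr\,B(x)=0$, and likewise in space $2$, so both terms vanish and \eqref{eq:bbcom} follows. Note that $B_1(x)$ and $B_2(y)$ carry independent spectral parameters and their entries are the linearly independent generators $B_{ij}^{(n)}$; hence no cancellation between the two terms is available and each must vanish separately, which is exactly what the scalar property delivers once \eqref{eq:Al2} is invoked.

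To establish the scalar property I would insert the explicit forms \eqref{eq:rbar} for $\overline r(x,y)$ and \eqref{eq:exM} for $M(x)$, and organize the computation by decomposing both matrices into their three natural pieces: the diagonal Cartan part built from $E_{ii}\otimes E_{ii}$, the standard off-diagonal part $\sum E_{ij}\otimes E_{ji}$, and the non-standard twisted part $\sum(-1)^{i+j}E_{ji}\otimes E_{ji}$. Computing $M_1M_2\overline r_{21}-\overline r_{21}M_1M_2$ piece by piece and then applying $tr_2$ collapses the products via $E_{ab}E_{cd}=\delta_{bc}E_{ad}$ and eliminates most off-diagonal contributions. The key point is that the specific $x$-dependence built into $M(x)$ — the overall factor $x-(-1)^N/x$ on the diagonal, together with the tuned combinations $\kappa_{ij}+\kappa_{ij}^*/x$ and $\kappa_{ij}+(-1)^N x\,\kappa_{ij}^*$ on the off-diagonal entries — is precisely what cancels the simple poles at $x=y$ and at $xy=(-1)^N$ carried by $\overline r$ and forces the residues to recombine into a multiple of $\mathbb I$. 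Since $\mu_i,\kappa_{ij},\kappa_{ij}^*$ are free, the identity must hold separately for the $\mu$-, $\kappa$- and $\kappa^*$-homogeneous components of $M_1M_2$, so I would verify the scalar property sector by sector; the identity for $Q_2$ then follows from the same calculation under $1\leftrightarrow 2$ and $(x,y)\mapsto(y,x)$. This final explicit check is routine but lengthy, and is the only place where the particular form of $M(x)$ enters — everything preceding it is formal manipulation within the non-standard Yang--Baxter algebra $\cB$.
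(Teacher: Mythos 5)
Your formal skeleton is correct, and it in fact rederives the criterion that the paper simply cites from \cite{BBC}: after pulling the spectator matrices out of the partial traces, your two matrices are commutators of c-number matrices, namely
$Q_2(x,y)=[\,tr_1\big(\overline{r}_{12}(x,y)M_1(x)\big)\,,\,M_2(y)\,]$ and, after relabelling the surviving space, $P_1(x,y)=-[\,tr_1\big(\overline{r}_{12}(y,x)M_1(y)\big)\,,\,M_2(x)\,]$. So your sufficient condition is exactly the paper's condition \eqref{eq:reD}, imposed at $(x,y)$ and at $(y,x)$. Moreover, being commutators, $P_1$ and $Q_2$ are automatically traceless, so ``proportional to $\mathbb{I}$'' is equivalent to ``equal to zero''; once this is observed, your appeal to the tracelessness \eqref{eq:Al2} of $B(x)$ becomes superfluous (the paper's route needs no such appeal), though it is not wrong.

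The genuine gap is that you never verify this c-number condition for the specific $M(x)$ of \eqref{eq:exM}, and that verification is the entire substance of the proposition. The reduction you perform holds for an arbitrary matrix $M$ with scalar entries; every piece of information about the tuned combinations $\kappa_{ij}+\kappa_{ij}^*/x$ and $\kappa_{ij}+(-1)^N x\,\kappa_{ij}^*$, and about the diagonal prefactor $x-(-1)^N/x$, sits in the check you defer as ``routine but lengthy.'' The paper's proof \emph{is} that check: it computes $tr_1\big(\overline{r}_{12}(x,y)M_1(x)\big)$ in closed form from \eqref{eq:rbar}, then expands the commutator with $M_2(y)$ and shows that the coefficient of each $E_{ij}$, $E_{ji}$ and $E_{ik}$ (for $i<j<k$) cancels identically --- the commutator vanishes outright, not merely lands in $\CC\,\mathbb{I}$. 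Your proposed organization of that computation (sector by sector in $\mu$, $\kappa$, $\kappa^*$) is sound and would succeed, but as written the proposal asserts the decisive identity rather than proving it, so it does not yet establish \eqref{eq:bbcom}.
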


\begin{proof}
In \cite{BBC}, it is shown that if
 \begin{equation}\label{eq:reD}
  [tr_1 ( \overline{r}_{12}(x,y) M_1(x) ) \ ,\ M_2(y) ]=0\; 
 \end{equation}
then \eqref{eq:bbcom} holds.
To show that (\ref{eq:exM})  satisfies (\ref{eq:reD}), we first compute $tr_1 ( \overline{r}_{12}(x,y) M_1(x) )$. 
By straightforward calculations, one gets:
\beqa
tr_1 ( \overline{r}_{12}(x,y) M_1(x) )&=& -\left(\frac{x+y}{x-y} + \frac{xy+(-1)^N}{(-1)^N-xy}\right) \left( x-\frac{(-1)^N}{x}\right) \left(\sum_{\genfrac{}{}{0pt}{1}{i=1}{}}^N   \mu_i E_{ii}-\frac{1}{N} \left(\sum_{i=1}^N\mu_i \right)\mathbb{I}\right) \nonumber\\
&& + \sum_{i< j}^N  \underbrace{\left( \left(\frac{2(-1)^N}{(-1)^N-xy}\right) (\kappa_{ij} + (-1)^N x \kappa_{ij}^*)  
+ \left( \frac{2x}{y-x}\right)(\kappa_{ij} + \kappa_{ij}^*/x)  \right) }_{W_{ij}}E_{ij} \nonumber\\
&&+\sum_{i< j}^N \underbrace{\left(\left( \frac{2xy}{xy-(-1)^N} \right)(-1)^{i+j} (\kappa_{ij} +{{(-1)^N}} \kappa_{ij}^*/x)  + \left(\frac{2y}{x-y}\right)(-1)^{i+j}(\kappa_{ij} + (-1)^N x \kappa_{ij}^*)\right)}_{V_{ij}} E_{ji} .\nonumber
\eeqa

Define
\beqa
U_i=-\left(\frac{x+y}{x-y} + \frac{xy+(-1)^N}{(-1)^N-xy}\right) \left( x-\frac{(-1)^N}{x}\right)\mu_i\ \nonumber
\eeqa
and introduce the notation:
\beqa
M(y)= \sum_{i=1}^N  \underbrace{\left( y-\frac{(-1)^N}{y}\right)}_{A_i}\mu_i \ E_{ii} + \sum_{i<j}^N \left( \underbrace{(\kappa_{ij} +  \kappa_{ij}^*/y)}_{B_{ij}} E_{ij} + \underbrace{(-1)^{i+j+1}(\kappa_{ij} + (-1)^N y \kappa_{ij}^*)}_{C_{ij}}  E_{ji} \right)\ .\nonumber
\eeqa
In terms of the functions $A_i,B_{ij},C_{ij},U_i,V_{ij},W_{ij}$, the l.h.s. of (\ref{eq:reD}) reads:
\beqa
[tr_1 ( \overline{r}_{12}(x,y) M_1(x) ) \ ,\ M_2(y) ]&=& \sum_{i< j}^N \underbrace{\left((U_i-U_j)B_{ij} + W_{ij}(A_j-A_i)\right)}_{=0} E_{ij} \nonumber\\
&&+ \sum_{i< j}^N \underbrace{\left((U_j-U_i)C_{ij} + V_{ij}(A_i-A_j)\right)}_{=0} E_{ji} \nonumber\\
&&+ \sum_{i< j<k}^N \underbrace{\left(V_{ji}B_{jk} - V_{kj}B_{ij} + W_{ij}C_{kj} - W_{jk}C_{ji}\right)}_{=0} E_{ik} \nonumber\\
&&+\sum_{i< j<k}^N \underbrace{\left(V_{ji}C_{kj} - V_{kj}C_{ji} + W_{ij}B_{jk} - W_{jk}B_{ij}\right)}_{=0} E_{ik}\ \nonumber\\
&=&0.\nonumber
\eeqa
\end{proof}

From the previous results, we obtain a generalization of the result of Uglov and Ivanov \cite[eq. (10)]{UI}. Namely, expanding the generating function given by $b(x)$ (\ref{tb2}) in $x$ for (\ref{eq:exM}), mutually commuting quantities of the $sl_N-$Onsager algebra are derived. For $n=0,1,2,\dots$, the coefficients of the power series are proportional to:
\beqa
\qquad\quad  I_{n}&=&
\sum _{i=1}^N \left(\kappa _{ij} \left(B_{ij}^{(n)}+(-1)^{i+j+1}
   B_{ji}^{(n)}\right)+\kappa _{i,j}^*
   \left((-1)^{i+j+N+1} B_{ji}^{(n-1)}+B_{ij}^{(n+1)}\right)\right)\label{eq:charges}\\&&\quad+
   \sum _{i=1}^N \mu_i \left((-1)^{N+1}
   B_{ii}^{(n+1)}+B_{ii}^{(n-1)}\right)
\quad \mbox{for}\quad n>1\ ,\nonumber\\
I_{0}&=&
\sum _{i<j}^N \left( \kappa _{ij}(-1)^{i+j+1} 
   B_{ji}^{(0)}+\kappa _{ij}^* B_{ij}^{(1)}\right)+(-1)^{N+1}\sum_{i=1}^N  \mu_i B_{ii}^{(1)}\ ,
\nonumber\\
I_{1}&=& \sum _{i<j}^N \left(\kappa _{ij} \left(B_{ij}^{(1)}+(-1)^{i+j+1}
   B_{ji}^{(1)}\right)+\kappa _{ij}^*
   \left((-1)^{i+j+N+1} B_{ji}^{(0)}+B_{ij}^{(2)}\right)\right)-(-1)^N \sum _{i=1}^N \mu_i B_{ii}^{(2)}.
\nonumber
\eeqa

\section{Higher rank `classical' Askey-Wilson algebras}
In \cite{BC}, certain quotients of the Onsager algebra considered by Davies \cite{Davies} were studied using the framework of the FRT presentation of the Onsager algebra proposed in \cite{BBC}. In particular, to each of these quotients  a generalization of the `classical' analog $(q=1)$ of the Askey-Wilson algebra \cite{Z91,Z92} is associated. By analogy, it is clearly expected that certain quotients of the $sl_N$-Onsager algebra should lead to  classical analogs of higher rank extensions of the Askey-Wilson algebra. Below, we exploit the FRT presentation of the  $sl_N$-Onsager algebra discussed  in the previous section to derive explicit examples for $N=3$ and $N=4$. Higher rank extensions of the Askey-Wilson algebra for arbitrary $N$ are briefly discussed.

\subsection{The $sl_3$-Askey-Wilson algebra $AW(a_{2}^{(1)})$} The starting point is the explicit construction of solutions of the non-standard Yang-Baxter algebra (\ref{eq:Al}), (\ref{eq:Al2}) that correspond to certain quotients of the $sl_3$-Onsager algebra. Besides the defining relations (\ref{Bdef1})-(\ref{Bdef3})  for $N=3$, we are looking for solutions of  (\ref{eq:Al}), (\ref{eq:Al2})  of the form (\ref{eq:Bx}) where certain additional linear relations among the generators $\{B_{ji}^{(n)}\}$ are assumed.
 The simplest non-trivial example that we have found for $N=3$ is displayed in the definition below.
\begin{defn}\label{def:AW2}
 The $sl_3$-Askey-Wilson algebra, denoted $AW(a_{2}^{(1)})$, is generated by $\{\be_i,\bof_i,\bg_j\ | \ i=1,2,3,j=1,2\}$ subject to the defining relations \eqref{eq:Al}, \eqref{eq:Al2} with
 \begin{equation}\label{eq:Baw2}
 B(x)=\frac{2}{\alpha+x-1/x}\begin{pmatrix}
                             \frac{2}{3}\bg_1+\frac{1}{3}\bg_2&\bof_1-x^{-1}\be_1 &\be_3+x^{-1}\bof_3\\
                             -x\be_1-\bof_1 &-\frac{1}{3}\bg_1+\frac{1}{3}\bg_2&\bof_2-x^{-1}\be_2\\
                             \be_3-x\bof_3 & -x\be_2-\bof_2&-\frac{1}{3}\bg_1-\frac{2}{3}\bg_2\ 
                            \end{pmatrix}\ .
\end{equation}
\end{defn}

Inserting (\ref{eq:Baw2}) into the relations \eqref{eq:Al}, one gets equivalently the following defining relations of the $sl_3$-Askey-Wilson algebra:
\begin{eqnarray}
&& [\be_i,\be_j]=\epsilon_{ijk}\bof_k\ ,\label{eq:AW21}\\
&& [\be_i,\bof_j]=\delta_{ij}\bg_i-\epsilon_{ijk}\be_k\ ,\label{eq:AW22}\\
&& [\be_i,\bg_j]=\alpha\be_i-2\bof_i + 3\delta_{ij}(2\bof_i-\alpha\be_i)\ ,\label{eq:AW23}\\
&& [\bof_i,\bof_j]=\epsilon_{ijk}(\bof_k-\alpha \be_k)\ ,\label{eq:AW24}\\
&& [\bof_i,\bg_j]=-\alpha\bof_i-2\be_i + 3\delta_{ij}(\alpha\bof_i+2\be_i)\ ,\label{eq:AW25}\\
&& [\bg_1,\bg_2]=0\ ,\label{eq:AW26}
\end{eqnarray}
where $\epsilon_{ijk}$ is the Levi-Civita symbol (Einstein summation is assumed on the index $k$)  and $\bg_3=-\bg_1-\bg_2$.\vspace{1mm}

By analogy with the $sl_3$-Onsager algebra which admits two presentations  (see Definition \ref{p1} and (\ref{eq:OAn})), an alternative presentation for the  $sl_3$-Askey-Wilson algebra can be identified.
\begin{prop}\label{pro1}
 The $sl_3$-Askey-Wilson algebra $AW(a_{2}^{(1)})$ is isomorphic to the algebra generated by $\{\bar \be_1,\bar \be_2,\bar \be_3\}$ and subject to the following relations:
\begin{eqnarray}
&&\big[\bar \be_i,\big[\bar \be_i,\bar \be_j\big]\big]=\bar \be_j \quad \mbox{for} \quad 1\leq i \neq j\leq 3\ , \label{eq:OA3}\\
&&\Big[ \big[\bar \be_i,\bar \be_j\big],[\bar \be_j,\bar \be_k\big] \Big] + \big[\bar \be_i,\bar \be_k]+\alpha\epsilon_{ijk} \bar \be_j=0  \quad \mbox{for} \quad 1\leq i \neq j\neq k \leq 3 \ .\label{eq:OA3q}
 \end{eqnarray}
\end{prop}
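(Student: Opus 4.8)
The plan is to establish the isomorphism by a Tietze-transformation argument, exhibiting two mutually inverse Lie-algebra homomorphisms built on the identification $\bar\be_i\leftrightarrow\be_i$. The key observation is that in $AW(a_{2}^{(1)})$ the generators $\bof_k$ and $\bg_i$ are \emph{redundant}: relation \eqref{eq:AW21} gives $\bof_k=[\be_i,\be_j]$ whenever $(i,j,k)$ is a cyclic permutation of $(1,2,3)$, while the diagonal case $i=j$ of \eqref{eq:AW22} gives $\bg_i=[\be_i,\bof_i]$. Hence $AW(a_{2}^{(1)})$ is already generated by $\be_1,\be_2,\be_3$ alone, and the content of the remaining relations should collapse onto \eqref{eq:OA3}--\eqref{eq:OA3q}. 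Conceptually this is the Askey--Wilson analog of the two presentations of $\cO(a_2^{(1)})$: relation \eqref{eq:OA3} is exactly the $sl_3$-Onsager relation \eqref{eq:OAn} (for $a_2^{(1)}$ every pair of nodes is adjacent), whereas \eqref{eq:OA3q} encodes precisely the extra quotient relations defining $AW(a_{2}^{(1)})$.

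First I would define $\phi\colon\mA\to AW(a_{2}^{(1)})$ by $\bar\be_i\mapsto\be_i$, where $\mA$ is the algebra presented by \eqref{eq:OA3}--\eqref{eq:OA3q}, and check that it is well-defined. Substituting $\bof_k=[\be_i,\be_j]$ and using \eqref{eq:AW22} one computes $[\be_i,[\be_i,\be_j]]=\be_j$, which is \eqref{eq:OA3}; substituting into \eqref{eq:AW24} one recovers \eqref{eq:OA3q}. Both reduce to short contractions of two Levi-Civita symbols; for instance with $(i,j,k)=(1,2,3)$ one finds $[[\be_1,\be_2],[\be_2,\be_3]]=[\bof_3,\bof_1]=\bof_2-\alpha\be_2$, and adding $[\be_1,\be_3]=-\bof_2$ and $\alpha\be_2$ yields $0$, as required.

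Conversely I would define $\psi\colon AW(a_{2}^{(1)})\to\mA$ by $\be_i\mapsto\bar\be_i$, $\psi(\bof_k)=[\bar\be_i,\bar\be_j]$ for $(i,j,k)$ cyclic, and $\psi(\bg_i)=[\bar\be_i,\psi(\bof_i)]$, and verify that the images satisfy \eqref{eq:AW21}--\eqref{eq:AW26}. Relation \eqref{eq:AW21} and the diagonal part of \eqref{eq:AW22} hold by construction; the off-diagonal part of \eqref{eq:AW22} is exactly \eqref{eq:OA3}; and \eqref{eq:AW24} is, after rewriting $\psi(\bof_i)$ as commutators of the $\bar\be$'s, equivalent to \eqref{eq:OA3q}. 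The off-diagonal cases of \eqref{eq:AW23} and \eqref{eq:AW25} then follow from a single application of the Jacobi identity together with \eqref{eq:OA3} and \eqref{eq:OA3q}. Finally $\phi$ and $\psi$ are mutually inverse on generators: $\psi\phi(\bar\be_i)=\bar\be_i$ is immediate, while $\phi\psi$ returns $\be_i$, $\bof_k$ and $\bg_i$ by \eqref{eq:AW21} and the diagonal case of \eqref{eq:AW22}.

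The main obstacle will be checking that $\psi$ respects the relations carrying the Cartan-type generators $\bg_i$ in the harder cases, namely the diagonal entries $i=j$ of \eqref{eq:AW23} and \eqref{eq:AW25} and the commutativity \eqref{eq:AW26}. These force one to expand triply- and quadruply-nested commutators such as $[\bar\be_i,[\bar\be_i,[\bar\be_{i+1},\bar\be_{i+2}]]]$ and to reduce them by repeated use of the Jacobi identity and of \emph{both} defining relations of $\mA$; for example one obtains $[\bar\be_1,[\bar\be_1,[\bar\be_2,\bar\be_3]]]=4[\bar\be_2,\bar\be_3]-2\alpha\bar\be_1$, which matches the $i=j$ value $4\bof_1-2\alpha\be_1$ of \eqref{eq:AW23}. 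The commutativity \eqref{eq:AW26} is the most demanding, being quartic in the $\bar\be_i$, though it is conceptually guaranteed by the commutativity of the Cartan part of $\cO(a_2^{(1)})$. That all these reductions close consistently is precisely the statement that \eqref{eq:OA3}--\eqref{eq:OA3q} is a complete set of relations, so that no independent identity is gained or lost in passing between the two presentations, which yields the isomorphism.
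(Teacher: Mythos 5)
Your proposal is correct and follows essentially the same route as the paper's proof: the paper likewise defines $\phi(\bar\be_i)=\be_i$, checks it is a homomorphism using \eqref{eq:AW21}, \eqref{eq:AW22} and \eqref{eq:AW24}, proves surjectivity by expressing $\bof_k$ and $\bg_j$ as the commutators \eqref{defcom}, and proves injectivity by verifying that the elements built from the $\bar\be_i$ satisfy \eqref{eq:AW21}--\eqref{eq:AW26} via the Jacobi identity and \eqref{eq:OA3}--\eqref{eq:OA3q}, which is exactly your inverse map $\psi$. Your sample computations check out (in particular $[\bar\be_1,[\bar\be_1,[\bar\be_2,\bar\be_3]]]=4[\bar\be_2,\bar\be_3]-2\alpha\bar\be_1$ does follow from the two defining relations), so the only difference is one of framing, Tietze transformations versus the paper's surjectivity-plus-injectivity wording.
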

\begin{proof} Let us denote by $\overline{AW}(a_{2}^{(1)})$ the algebra with the defining relations (\ref{eq:OA3}), (\ref{eq:OA3q}). Firstly, let us show that the map $\phi:\overline{AW}(a_{2}^{(1)})\rightarrow AW(a_{2}^{(1)})$ defined by $\phi(\bar \be_i)=\be_i$ is an algebra homomorphism.  
By using the relations of ${AW}(a_{2}^{(1)})$, we show that, for $1\leq i\neq j \leq 3$,  
\begin{equation}
 [\be_i,[\be_i,\be_j]]=\epsilon_{ijk}[\be_i,\bof_k]=\epsilon_{ijk}(\delta_{ik}\bg_i-\epsilon_{ik\ell}\be_\ell)=(\delta_{j\ell}-\delta_{i\ell}\delta_{ij})\be_\ell=\be_j
\end{equation}
and, for $1\leq i \neq j\neq k \leq 3$,
\begin{equation}
 [[\be_i,\be_j],[\be_j,\be_k]]=\epsilon_{ijs}\epsilon_{jk\ell}\epsilon_{s\ell p}(\bof_p-\alpha\be_p)=\epsilon_{ijk}(\bof_j-\alpha\be_j)=-[\be_i,\be_k]-\alpha\epsilon_{ijk}\be_j\;.
\end{equation}
Therefore, $\phi$ is an algebra homomorphism. \vspace{1mm}

Secondly, let us remark that ${AW}(a_{2}^{(1)})$ is generated only by $\{\be_1, \be_2, \be_3\}$. Indeed, using relations \eqref{eq:AW21} and \eqref{eq:AW22}, we can express the generators 
$\bof_1, \bof_2, \bof_3, \bg_1$ and $\bg_2$ as follows
\begin{eqnarray}
 \bof_1=[\be_2,\be_3]\ ,\ \bof_2=[\be_3,\be_1]\ ,\ \bof_3=[\be_1,\be_2]\ ,\ \bg_1=[\be_1,[\be_2,\be_3]]\ ,\ \bg_2=[\be_2,[\be_3,\be_1]]\;.\label{defcom}
\end{eqnarray}
This proves the surjectivity of $\phi$.\vspace{1mm}

 Finally, we must prove the injectivity of $\phi$.  Let us define $\bar \bof_1, \bar \bof_2, \bar \bof_3,\bar  \bg_1$ and $\bar \bg_2$ the elements generated from $\bar \be_1,\bar \be_2,\bar \be_3$ by analogy with (\ref{defcom}).
We must prove that $\{\bar \be_i,\bar \bof_i,\bar \bg_j\ | \ i=1,2,3,j=1,2\}$ satisfy 
relations \eqref{eq:AW21}-\eqref{eq:AW26} if they satisfy $\bar \be_1,\bar \be_2,\bar \be_3$ satisfy relations \eqref{eq:OA3}-\eqref{eq:OA3q}. Equations \eqref{eq:AW21} are satisfied because of the definition of 
$\bar \bof_i$. Equations \eqref{eq:AW22} for $i=j=1,2$ are satisfied due to the definition of $\bar \bg_1$ and $\bar \bg_2$. Then, one gets
\begin{equation}
 [\bar \be_3,\bar \bof_3]=[\bar \be_3,[\bar \be_1,\bar \be_2]]= [[\bar \be_3,\bar \be_1],\bar \be_2]+[\bar \be_1,[\bar \be_3,\bar \be_2]]=  -\bar \bg_1-\bar \bg_2
\end{equation}
using the Jacobi identity. This  proves eq. \eqref{eq:AW22} for $i=j=3$. 
By using relation \eqref{eq:OA3}, we show that
\begin{equation}
 [\bar \be_1,\bar \bof_2]=[\bar \be_1,[\bar \be_3,\bar \be_1]]=-\bar \be_3\ ,
\end{equation}
which proves eq. \eqref{eq:AW22} for $i=1$ and $j=2$. The remaining relations \eqref{eq:AW22} are proven similarly.
Now, using the Jacobi identity and relations \eqref{eq:OA3}-\eqref{eq:OA3q}, one gets
\begin{equation}
 [\bar\be_1,\bar\bg_2]=[\bar\be_1,[\bar\be_2,[\bar\be_3,\bar\be_1]]]
 =[[\bar\be_1,\bar\be_2],[\bar\be_3,\bar\be_1]]+[\bar\be_2,[\bar\be_1,[\bar\be_3,\bar\be_1]]]
 =-[\bar\be_2,\bar\be_3]+\alpha\bar \be_1-[\bar\be_2,\bar\be_3]=-2\bar \bof_1+\alpha\bar \be_1\ ,
\end{equation}
which proves relations \eqref{eq:AW23} for $i=1$ and $j=2$.
All the other relations \eqref{eq:AW23}-\eqref{eq:AW26} are proven in the same way which concludes the proof.
\end{proof}

\begin{rem}
From Proposition \ref{pro1} and the presentation \eqref{eq:OAn} for $N=3$, the $sl_3$-Askey-Wilson algebra ${AW}(a_{2}^{(1)})$ is the quotient of $sl_3$-Onsager algebra by \eqref{eq:OA3q}.
\end{rem}

\begin{rem}There exists one more equivalent presentation of the $sl_3$-Askey-Wilson algebra by replacing relations \eqref{eq:OA3q} by
\begin{eqnarray} 
[\bar \be_i ,[\bar \be_j,[\bar \be_k,\bar \be_i]]]=\alpha\epsilon_{ijk}\bar \be_i-2[\bar \be_j,\bar \be_k]\ \qquad \mbox{for} \quad 1\leq i \neq j\neq k \leq 3\ .
\end{eqnarray}
\end{rem}
\vspace{1mm}

\subsection{The $sl_4$-Askey-Wilson algebra $AW(a_{3}^{(1)})$}
Following the previous subsection, the simplest non-trivial solution of  (\ref{eq:Al}), (\ref{eq:Al2})  of the form (\ref{eq:Bx}) for $N=4$ is given in the definition below.
\begin{defn}\label{def:AW3}
The $sl_4$-Askey-Wilson algebra, denoted $AW(a_{3}^{(1)})$, is generated by $\{\be_i,\bof_i,\bg_i,\bh_j\ | \ i=1,2,3,4,j=1,2,3\}$ subject to the defining relations \eqref{eq:Al} with
\begin{equation}\label{eq:Baw3}
B(x)=\frac{2}{\alpha -x-x^{-1}}
\begin{pmatrix}
 \frac{3}{4}\bh_1+\frac{1}{2}\bh_2+\frac{1}{4}\bh_3 & \bg_1+x^{-1}\be_1 &
 \bof_1+  x^{-1}\bof_3 & -\be_4-x^{-1}\bg_4 \\
-\bg_1 -x\be_1  & -\frac{1}{4}\bh_1+\frac{1}{2}\bh_2+\frac{1}{4}\bh_3 & \bg_2+x^{-1}\be_2 &
\bof_2+x^{-1}   \bof_4 \\
 \bof_1+x \bof_3 &  -\bg_2-x\be_2 & -\frac{1}{4}\bh_1-\frac{1}{2}\bh_2+\frac{1}{4}\bh_3 &
   \bg_3+x^{-1}\be_3 \\
 \be_4+x\bg_4  & \bof_2+x \bof_4 & -\bg_3-x\be_3 &
   -\frac{1}{4}\bh_1-\frac{1}{2}\bh_2-\frac{3}{4}\bh_3 \\
\end{pmatrix}.
\end{equation}
\end{defn}
Inserting (\ref{eq:Baw3}) into the relations \eqref{eq:Al}, one gets equivalently the following defining relations of the $sl_4$-Askey-Wilson algebra:
\begin{eqnarray}
&&\left[\be_i,\be_{i+1}\right]=\bof_{i+2}\,,\label{eq:AW31}\\
&&\left[\be_i,\be_{i+2}\right]=0\,,\nonumber\\
&&\left[\be_i,\bof_{j}\right]=\delta_{i,j}\bg_{i+1}-\delta_{j,i-1}\bg_{i-1}-\delta_{j,i+1}\be_{i-1}+\delta_{j,i+2}\be_{i+1}\,,\label{eq:AW32}\\
&&\left[\be_i,\bg_{j}\right]=-\delta_{i,j}\bh_i+\delta_{j,i+1}\bof_i-\delta_{j,i-1}\bof_{i-1}\,,\label{eq:AW33}\\
&&\left[\be_i,\bh_{j}\right]=-\delta_{i,j}\left(2\alpha\be_i+4\bg_i\right)+\left(\delta_{j,i+1}+\delta_{j,i-1}\right)\left(\alpha\be_i+2\bg_i\right)\,,\label{eq:AW34}\\
&&\left[\bof_i,\bof_{i+1}\right]=0\,,\label{eq:AW35}\\
&&\left[\bof_i,\bof_{i+2}\right]=-\bh_{i}-\bh_{i+1}\,,\nonumber\\
&&\left[\bof_{i},\bg_j\right]=-\delta_{j,i-1}\left(\alpha\be_{i-2}+\bg_{i-2}\right)+\delta_{j,i-2}\left(\alpha\be_{i-1}+\bg_{i-1}\right)-\delta_{i,j}\be_{i+1}+\delta_{j,i+1}\be_{i}\,,\label{eq:AW36}\\
&&\left[\bof_{i},\bh_j\right]=\left(\delta_{i,j}-\delta_{j,i-1}+\delta_{j,i+1}-\delta_{j,i-2}\right)\left(\alpha\bof_{i}+2\bof_{i+2}\right)\,,
\label{eq:AW37}
\\
&&\left[\bg_i,\bg_{i+1}\right]=-\alpha\bof_i-\bof_{i+2}\,,\label{eq:AW38}\\
&&\left[\bg_i,\bg_{i+2}\right]=0\,,\nonumber\\
&&\left[\bg_i,\bh_{j}\right]=\delta_{i,j}\left(4\be_i+2\alpha\bg_i\right)-\left(\delta_{j,i+1}+\delta_{j,i-1}\right)\left(2\be_i+\alpha\bg_i\right)\,,\label{eq:AW39}\\
&&\left[\bh_i,\bh_{j}\right]=0\,,\label{eq:AW310}
\end{eqnarray}
where all indices should be taken $mod(4)$ and $\bh_4=-\bh_1-\bh_2-\bh_3$.\vspace{1mm}

Similarly to the case $N=3$, an alternative presentation for the  $sl_4$-Askey-Wilson algebra can be identified. The proof of the following proposition essentially follows the same steps as for the case $N=3$. For this reason, details are omitted.
\begin{prop}\label{pro2}
 The $sl_4$-Askey-Wilson algebra $AW(a_{3}^{(1)})$ is isomorphic to the algebra generated by $\{\bar \be_1,\bar \be_2,\bar \be_3,\bar \be_4\}$ and subject to the following relations:
\begin{eqnarray}
&&\big[\bar \be_i,\big[\bar \be_i,\bar \be_{i\pm1}\big]\big]=\bar \be_{i\pm1} \ , \label{eq:OA4a}\\
&&\big[\bar \be_i,\bar \be_{i+2}\big]=0\ , \label{eq:OA4b}\\
&&\Big[\big[\bar \be_i,\bar \be_{i+1}\big],\big[\bar \be_{i+1},\bar \be_{i+2}\big]\Big]=0 \ , \label{eq:OA4c}\\
&&\Big[
\big[\bar \be_i,\bar \be_{i+1}\big],
\big[\bar\be_{i+1},\big[\bar\be_{i+2},\bar\be_{i+3}\big]\big]
\Big]=-\alpha \bar\be_{i+1}-\big[\bar\be_i,\big[\bar\be_{i+2},\bar\be_{i+3}\big]\big] \ , \label{eq:OA4d}\\
&&\bigg[
\big[\bar\be_{i},\big[\bar\be_{i+1},\bar\be_{i+2}\big]\big],
\Big[\bar\be_{i+3},\big[\bar\be_{i},\big[\bar\be_{i+1},\bar\be_{i+2}\big]\big]\Big]
\bigg]=-4\bar\be_{i+3}+2\alpha \big[\bar\be_{i},\big[\bar\be_{i+1},\bar\be_{i+2}\big]\big] \ , \label{eq:OA4e}
 \end{eqnarray}
for $i=1,2,3,4$ and where all indices should be taken $mod(4)$.
\end{prop}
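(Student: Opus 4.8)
The plan is to reproduce, step for step, the three-part argument of Proposition \ref{pro1}. Write $\overline{AW}(a_{3}^{(1)})$ for the algebra presented by $\bar\be_1,\dots,\bar\be_4$ and relations (\ref{eq:OA4a})--(\ref{eq:OA4e}), and define $\phi:\overline{AW}(a_{3}^{(1)})\to AW(a_{3}^{(1)})$ on generators by $\phi(\bar\be_i)=\be_i$. The goal is to show $\phi$ is an algebra isomorphism, and by the cyclic $mod(4)$ symmetry shared by both presentations it suffices to verify every relation for a single representative value of $i$.

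First I would check that $\phi$ is a homomorphism, i.e.\ that $\be_1,\dots,\be_4$ satisfy (\ref{eq:OA4a})--(\ref{eq:OA4e}); this is a direct computation from (\ref{eq:AW31})--(\ref{eq:AW310}). Relation (\ref{eq:OA4b}) is just the second line of (\ref{eq:AW31}); relation (\ref{eq:OA4a}) follows from $[\be_i,\be_{i+1}]=\bof_{i+2}$ together with (\ref{eq:AW32}); and (\ref{eq:OA4c}) is the first line of (\ref{eq:AW35}) after writing both inner brackets as $\bof$-generators. For (\ref{eq:OA4d}) one rewrites $[\be_{i+1},[\be_{i+2},\be_{i+3}]]$ as a single $\bg$-generator via (\ref{eq:AW31})--(\ref{eq:AW32}) and then applies (\ref{eq:AW36}); for (\ref{eq:OA4e}) one reduces $[\be_i,[\be_{i+1},\be_{i+2}]]$ to a $\bg$-generator, the next bracket to an $\bh$-generator through (\ref{eq:AW33}), and closes with (\ref{eq:AW39}). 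In each case the $\alpha$-dependent and $\alpha$-independent terms match on both sides.

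Second I would prove surjectivity by showing $AW(a_{3}^{(1)})$ is generated by $\be_1,\dots,\be_4$ alone: relations (\ref{eq:AW31})--(\ref{eq:AW33}) express the remaining generators as iterated brackets, namely $\bof_{i+2}=[\be_i,\be_{i+1}]$, then $\bg_{i+1}=[\be_i,\bof_i]$ (the $i=j$ case of (\ref{eq:AW32})), and finally $\bh_i=-[\be_i,\bg_i]$ (the $i=j$ case of (\ref{eq:AW33})), with $\bh_4=-\bh_1-\bh_2-\bh_3$. Hence every generator lies in the image of $\phi$.

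Third, and this is the main obstacle, I would prove injectivity by building the inverse: define $\bar\bof_i,\bar\bg_i,\bar\bh_j$ inside $\overline{AW}(a_{3}^{(1)})$ by the very same bracket formulas and verify that the full set $\{\bar\be_i,\bar\bof_i,\bar\bg_i,\bar\bh_j\}$ satisfies all of (\ref{eq:AW31})--(\ref{eq:AW310}) as a consequence of (\ref{eq:OA4a})--(\ref{eq:OA4e}) and the Jacobi identity, thereby exhibiting a homomorphism $AW(a_{3}^{(1)})\to\overline{AW}(a_{3}^{(1)})$ that inverts $\phi$. The definitional relations ((\ref{eq:AW31}) and the $i=j$ cases of (\ref{eq:AW32})--(\ref{eq:AW33})) are immediate, and the off-diagonal cases of (\ref{eq:AW32})--(\ref{eq:AW33}) follow from (\ref{eq:OA4a})--(\ref{eq:OA4b}) exactly as in the $N=3$ argument. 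The hard part will be the relations containing $\bh$ and the $\bof$--$\bg$, $\bg$--$\bg$ and $\bh$--$\bh$ brackets (\ref{eq:AW34}), (\ref{eq:AW36})--(\ref{eq:AW310}): since $\bar\bh_j$ is a triple-nested bracket of the $\bar\be$'s, these unfold into quintuple and higher nested commutators that must be collapsed by repeated use of the Jacobi identity, the Serre relation (\ref{eq:OA4a}), the commuting relation (\ref{eq:OA4b}), and crucially the higher Askey-Wilson relations (\ref{eq:OA4c})--(\ref{eq:OA4e}), which is precisely where the $\alpha$-terms appear and where closure back onto $\be,\bof,\bg,\bh$ is guaranteed. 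The cyclic reduction keeps this bookkeeping finite, but it is the step I expect to demand by far the most careful computation.
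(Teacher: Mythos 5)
Your proposal is correct and follows exactly the route the paper intends: the paper's own ``proof'' of Proposition \ref{pro2} simply states that it follows the same three steps as Proposition \ref{pro1} (homomorphism, surjectivity via expressing $\bof_i,\bg_i,\bh_j$ as nested brackets of the $\be_i$, injectivity by verifying relations \eqref{eq:AW31}--\eqref{eq:AW310} for the barred elements) and omits the details, which your outline fills in consistently, including the correct identifications $\bg_{i+1}=[\be_i,\bof_i]$ and $\bh_i=-[\be_i,\bg_i]$ and the use of the cyclic $mod(4)$ symmetry to reduce the bookkeeping.
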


\subsection{Generalizations}
By analogy with the analysis of the previous subsections, we can propose a general form of $B(x)$ that will generate the $sl_N$-Askey-Wilson algebra, denoted $AW(a_{N-1}^{(1)})$. We find:
\beqa
B(x)= \frac{2}{(\alpha +(-1)^{N+1} x-x^{-1})}\left(\bold{b}_{i,j}(x)\right)_{1\leq i,j\leq N}
\eeqa
where
\ben\label{eq:BawN-1}
\qquad &&\bold{b}_{i,i}(x)=\sum_{l=1}^{N-1}c_{i}^l\bof_{l,l-1}   \qquad \mbox{with}\quad c_{i}^l = \frac{N-l}{N} \quad \mbox{if} \quad i\leq l, \quad  c_{i}^l = -\frac{l}{N} \quad \mbox{if} \quad i>l,  \quad l=1,...,N-1 \,,\nonumber\\
&&\bold{b}_{i,j}(x)=(-1)^N\left(-\bof_{i+1,i-1}+x^{-1}\be_i\right)\quad\mbox{for}\quad j-i=1,\, i<j \ ,\nonumber\\
&&\bold{b}_{i,j}(x)=(-1)^{N+1}\be_N+x^{-1}\bof_{1,N-1}\quad\mbox{for}\quad (i,j)=(1,N) \ ,\nonumber\\
&&\bold{b}_{i,j}(x)= (-1)^N\bof_{j+1,j-1}- x\be_j\quad\mbox{for}\quad i-j=1,\, i>j\ ,\nonumber\\
&&\bold{b}_{i,j}(x)=\be_N-x\bof_{1,N-1}\quad\mbox{for}\quad (i,j)=(N,1) \ ,\nonumber\\
&&\bold{b}_{i,j}(x)=(-1)^{(j-i)(N+1)}\left(
\bof_{j,i-1}+(-1)^{j-i}x^{-1}
\bof_{i,j-1}
\right)
\quad\mbox{for}\quad i<j, \,i,j\,\,\,\mbox{not~adjacent}\,,\nonumber\\
&&\bold{b}_{i,j}(x)=
(-1)^{(i-j)N}(\bof_{i,j-1}+(-1)^{(i-j)+N}x
\bof_{j,i-1})
\quad\mbox{for}\quad i>j, \,i,j\,\,\,\mbox{not~adjacent}\,\nonumber \ ,
\een
and $\bof_{i,j}=[\be_i,[\be_{i+1},[\dots,[\be_{j-1},\be_{j}] \dots]]]$ is a multiple commutator.
Note that indices in the multiple commutators should be taken modulo $N$.

\section{Concluding remarks}
In this letter, in a first part we have obtained a FRT presentation for the $sl_N$-Onsager algebra from which the corresponding current algebra and a generating function for mutually commuting elements have been derived. Thus, besides the two original presentations of the $sl_N$-Onsager algebra  given in \cite{UI}, this third presentation connects with the realm of Yang-Baxter algebras and related techniques such as the Bethe ansatz, opening new perspectives in the context of quantum integrable systems. In a second part, we have shown that quotients of the  $sl_N$-Onsager algebra lead to higher rank extensions of the classical analog ($q=1$) of the Askey-Wilson algebra \cite{Z91,Z92} following the strategy recently used in \cite{BC} based on the FRT presentation of the Onsager algebra \cite{Ons44}. \vspace{1mm}

Although the results are not reported here, let us mention that a similar analysis can be pursued if one considers the second automorphism $\theta_2$ given in (\ref{eq:aut2}) instead of $\theta_1$. In this case, we have obtained a different type of $sl_N$-Onsager algebra. The corresponding current algebra and mutually commuting elements have been derived. For $N=2$, this algebra was introduced and studied in \cite{BC13} (see also \cite{BBC}); it was called the augmented Onsager algebra, as it is a special case of the augmented trididiagonal algebra introduced by Ito and Terwilliger in \cite{IT10}. Thus, for $N>2$, we have obtained a $sl_N$-augmented Onsager algebra. 
\vspace{1mm}

Several open problems may be now considered.\vspace{1mm}

Firstly, in order to complete the picture,  one could extend the analysis  here presented - that is based on the framework of \cite{BBC} specialized here to the case of $a_{N-1}^{(1)}$ - to the case of any affine Lie algebra. The corresponding FRT presentation would provide a complete description of the generalized Onsager algebras $\cO(\widehat{g})$ associated with any affine Lie algebra $\widehat{g}$ \cite{UI,DU04,S18}. A related issue is the construction of a current presentation for the $q-$deformed analog of the generalized Onsager algebras $\cO_q(\widehat{g})$ \cite{BB1,BF} that is still an open problem. The solution would find applications in the analysis of the thermodynamic limit of open spin chains with higher rank symmetries, see e.g. \cite{FK}. By analogy with the approach presented here, following \cite{WZ,BK2} the defining relations of the higher rank Askey-Wilson algebras $AW(\widehat{g})$ for $q\neq 1$ could also be extracted from the reflection equation algebra associated with the R-matrix of $U_q(\widehat{g})$. 
 \vspace{1mm}

Secondly, as mentionned in the introduction, in the context of the theory of special functions it is known that the Askey-Wilson algebra provides an algebraic scheme for the classification of all orthogonal polynomials in one variable  that enjoy a bispectral property \cite{Z91,Z92}. By analogy,  an interesting problem would be to characterize the family of multivariate special functions associated with the higher rank classical Askey-Wilson $AW(a_{N-1}^{(1)})$  (and its $q-$analog) and to understand the notion of  bispectrality in this context.\vspace{1mm}

\vspace{0.5cm}
\noindent{\bf Acknowledgements:} 
 R.P. thanks the Institut Denis Poisson for hospitality where part of this work has been done. R.P. is supported by
the S\~ao Paulo Research Foundation (FAPESP) and by the Coordination for the Improvement of Higher Education Personnel (CAPES), grants \# 2017/02987-8 and \#88881.171877/2018-01.
P.B.  and N.C. are supported by C.N.R.S. 
\vspace{0.2cm}

\end{document}